\documentclass[submission,copyright,creativecommons]{eptcs}
\usepackage{breakurl}             
\usepackage{underscore}

\usepackage{latexsym}
\usepackage{color}
\usepackage{lscape}
\usepackage{setspace}
\usepackage{mymacros}

\usepackage{makeidx}
\usepackage{epsfig}
\usepackage{accenti}
\usepackage{path}
\usepackage{indentfirst}
\usepackage{graphicx, multicol, latexsym, amsmath, amssymb}
\usepackage{blindtext}

\usepackage{amsthm}
\usepackage{amsmath}
\usepackage{amsfonts}
\usepackage{amssymb}
\usepackage{stmaryrd}
\usepackage{booktabs}
\usepackage{graphicx}
\usepackage{subfig}
\usepackage{caption}
\usepackage{algorithm}
\usepackage{algpseudocode}
\usepackage{tikz-cd}
\usetikzlibrary{matrix,arrows,decorations.pathmorphing}

 \pagestyle{empty}
 \pagestyle{plain}


\newcounter{ncomm}

\newcommand{\com}[1]{{\color{black}#1}}

\begin{document}

\newtheorem{example}{Example}
\newtheorem{definition}{Definition}
\newtheorem{theorem}{Theorem}
\newtheorem{lemma}{Lemma}
\newtheorem{proposition}{Proposition}
\newtheorem{remark}{Remark}
\newtheorem{corollary}{Corollary}
\newtheorem{claim}{Claim}

\title{Persistent Stochastic  Non-Interference}
\author{Jane Hillston
\institute{University of Edinburgh, UK}
\email{Jane.Hillston@ed.ac.uk}
\and
Carla Piazza
\institute{Universit\`a di Udine, Italy}
\email{carla.piazza@uniud.it}
\and
 Sabina Rossi
 \institute{Universit\`a Ca' Foscari Venezia, Italy}
 \email{sabina.rossi@unive.it}
}
\def\titlerunning{Persistent Stochastic  Non-Interference}
\def\authorrunning{J. Hillston, C. Piazza \& S. Rossi}

\maketitle

\begin{abstract}
In this paper we present an information flow security property for  stochastic, cooperating, processes expressed as terms of the Performance Evaluation Process Algebra (PEPA). We introduce the notion of \emph{Persistent Stochastic Non-Interference (PSNI)} based on the idea that every state reachable by a process satisfies a basic \emph{Stochastic Non-Interference (SNI)} property. 
The structural operational semantics of PEPA allows us to give two characterizations of \emph{PSNI}: the first involves a single bisimulation-like equivalence check, while the second is formulated in terms of unwinding conditions. The observation equivalence at the base of our definition relies on the notion of lumpability and ensures that, for a secure process $P$,  the steady state probability of observing the system being in a specific state $P'$ is independent from its possible high level interactions.
\end{abstract}

\section{Introduction}
\label{sec:intro}

Non-Interference is an information flow security property
which aims at protecting sensitive data from undesired accesses. In particular, it consists in protecting the confidentiality of information by guaranteeing that high level, sensitive, information never flows to low level, unauthorized, users. It is well known that access control policies or cryptographic protocols are, in general, not sufficient to forbid unwanted flows which may arise from the so called covert channels or from some weakness in the cryptographic algorithms.

The notion of Non-Interference for deterministic systems has been introduced in  \cite{NI}  and it has been extended to  non-deterministic systems. Non-Interference has been then studied in different settings such as programming languages \cite{FRS05,SM03,SV98},
trace models \cite{Man00b,McL94},
cryptographic protocols \cite{ABF18,BR05,FGM-ICALP},
process calculi \cite{CR06,CR07,FG01,HR03,RS01},
probabilistic models \cite{AB09,PHW02}, timed models \cite{JSAC,GLM03}, and stochastic models \cite{AB09}.

In this paper we study a notion of Non-Interference  for  stochastic, cooperating, processes expressed as terms of the Performance Evaluation Process Algebra (PEPA) \cite{hillston:book}.
We introduce the notion of \emph{Persistent Stochastic Non-Interference (PSNI)} based on the idea that every state reachable by a process satisfies a basic \emph{Stochastic Non-Interference (SNI)} property.
By imposing that security persists during process execution, the system  is guaranteed 
to be dynamically secure in the sense that every potential transition leads the process to a secure state. 
Property \emph{SNI} is inspired by the \emph{Bisimulation-based Non-Deducibility on Compositions (BNDC)} property defined in \cite{FG95} for non-deterministic CCS processes.
In our setting, the definition has the following form:
a process $P$ is secure if a low level observer cannot distinguish the behavior of $P$ in isolation from  the behavior of $P$ cooperating with any possible high level process $H$. The notion of observation that we consider is based on the concept of lumpability  for the underlying Markov chain \cite{marin:valuetools13,marin:mascots14,MR-acta17}.
 Formally, property \emph{SNI} is defined as: for any high level process $H$ which may enable only high level activities, 
$$P\setminus {\cal H}\approx_l (P\sync{{\cal H}}H)/{\cal H}$$
where $P\setminus {\cal H}$ represents the low level view of $P$ in isolation, while $(P\sync{{\cal H}}H)/{\cal H}$ denotes the low level view of $P$ interacting with the high process $H$. The observation equivalence $\approx_l$ is the \emph{lumpable bisimilarity}  defined in \cite{marin:valuetools13} which is a
characterization
of a lumpable
relation over the terms of the process algebra PEPA preserving contextuality and inducing a lumping in the underlying Markov processes.
Notice that this basic security property, that we call \emph{Stochastic Non-Interference (SNI)} is not persistent in the sense that it is not preserved during system execution.
Thus, it might happen that a system satisfying \emph{SNI} reaches a state which is not secure.
To overcome this problem we introduce the notion of \emph{Persistent Stochastic Non-Interference (PSNI)} which requires that every state reachable by the system is secure, i.e.,
$P$ is secure if and only if
$$\forall P' \mbox{ reachable from } P,  \ \  P' \mbox{ satisfies } SNI\,.$$
Notice that this property contains two universal quantifications: one over all the reachable states and another one, inside the definition of \emph{SNI}, over all the possible high level processes which may interact with the considered system. The main contributions of this paper are:
\begin{itemize}
\itemsep -0.5pt
\item we provide a characterization of \emph{PSNI} in terms of a single bisimulation-based check thus avoiding the universal quantification over all the high level contexts;
\item based on the structural operational semantics of PEPA, we provide a characterization of \emph{PSNI} expressed in terms of unwinding conditions;
  \item we prove that \emph{PSNI} is compositional with respect to low prefix, cooperation over low actions and hiding;
  \item we prove that if $P$ is secure then the equivalence class $[P]$ with respect to  lumpable bisimilarity $\approx_l$ is closed under \emph{PSNI};
  \item we show through an example that if $P$ is secure then, from the low level point of view,
the steady state probability of observing the system being in a specific state $P'$ is independent from the possible high level interactions of $P$.
    
\end{itemize}

\emph{Structure of the paper.} 
The paper is organized as follows: in Section \ref{sec:calculus} we introduce the process algebra PEPA, its structural operational semantics, and the observation equivalence named \emph{lumpable bisimilarity}. The notion of \emph{Persistent Stochastic Non-Interference (PSNI)} and its characterizations are presented in Section \ref{sec:psni}. In Section \ref{sec:compositionality} we prove some compositionality result and other properties of \emph{PSNI}. Comparisons with other SOS-based persistent security properties are discussed in Section \ref{sec:comparison}. Finally, Section \ref{sec:conclusion} concludes the paper.
\begin{table*}[t]
	\begin{center}
	\begin{tabular}{ccc}
		\toprule
		& \\
 \multicolumn{3}{c}
{$\dfrac{}{(\alpha,r).P \transits{(\alpha,r)} P}$ \qquad
$\dfrac{P \transits{(\alpha,r)} P'}{P+Q \transits{(\alpha,r)} P'}$ \qquad
  $\dfrac{Q \transits{(\alpha,r)} Q'}{P+Q \transits{(\alpha,r)} Q'}$}
  \\[8mm]
  \multicolumn{3}{c}
 { $\dfrac{P \transits{(\alpha,r)} P'}{P/L \transits{(\alpha,r)} P'/L}$  $(\alpha  \not \in L)$ \qquad
$\dfrac{P \transits{(\alpha,r)} P'}{P/L \transits{(\tau,r)} P'/L}$  $(\alpha  \in L)$ }
\\[8mm]
\multicolumn{3}{c}
{$\dfrac{P \transits{(\alpha,r)} P'}{A \transits{(\alpha,r)} P'}$  $(A\rmdef P)$ \qquad $\dfrac{P \transits{(\alpha,r)} P'}{P \sync{L} Q \transits{(\alpha,r)} P'\sync{L} Q}$  $(\alpha \not \in L)$ \qquad
 $\dfrac{Q \transits{(\alpha,r)} Q'}{P \sync{L} Q \transits{(\alpha,r)} P\sync{L} Q'}$ $(\alpha \not \in L)$}\\[8mm]
 \multicolumn{3}{c}
{ $\dfrac{P \transits{(\alpha,r_1)} P' \ \ Q \transits{(\alpha,r_2)} Q'}{P\sync{L} Q \transits{(\alpha,R)} P'\sync{L} Q'}$ \ \ \   $R =  \dfrac{r_1}{r_{\alpha}(P)} \dfrac{r_2}{r_{\alpha}(Q)} \ \mathrm{min}(r_{\alpha}(P),r_{\alpha}(Q))$ \ $(\alpha  \in L)$}
\\
	\bottomrule 
\end{tabular}
\end{center}
\caption{Operational semantics for PEPA components}
\label{table:rules}
\end{table*}

\section{The Calculus}\label{sec:calculus}
PEPA (Performance Evaluation Process Algebra)  \cite{hillston:book} is an
algebraic calculus  enhanced with stochastic timing infor\-mation
 which may~be used to~calculate performance measures 
as well as prove functional system~properties. 
%
%
%
%

The basic elements of PEPA are \emph{components} and \emph{activities}.
Each activity is represented by a pair $(\alpha, r)$ where $\alpha$ is a label, or \emph{action type}, 
and $r$ is its  \emph{activity rate},  that is the parameter of a negative
exponential distribution determining its duration. 
We assume that
there is a countable
 set, $\cA$,  of possible action types, including a distinguished
type, $\tau$,
which can be regarded as the \emph{unknown} type.
Activity rates may be any positive real number, or the distinguished symbol
$\top$ which should be read as \emph{unspecified}.
%
%

The syntax for  PEPA terms is defined by the  grammar:
$$\begin{array}{cclccl}
P & ::= & P \sync{L} P \mid P/L \mid S\\[1mm]
S & ::= & (\alpha, r).S \mid S+S \mid A
\end{array}$$
where $S$ denotes a \emph{sequential component}, while $P$ denotes a
\emph{model component} which executes in parallel. 
We assume that there is  a countable set of \emph{constants},
$A$.
We write $\cC$ for the set of all possible
components.

\subsection{Structural Operational Semantics}
PEPA is given a structural operational semantics, as shown in Table \ref{table:rules}.
The component $(\alpha, r).P$ carries out the activity $(\alpha, r)$
of type  $\alpha$ at rate $r$ and subsequently behaves as  $P$. When $a=(\alpha, r)$, the component $(\alpha, r).P$  may be written as $a.P$.
The component $P+Q$ represents a system which may behave either as  $P$ or as  $Q$. $P+Q$ enables all the current activities of both $P$ and $Q$. The first activity to complete distinguishes one of the components, $P$ or $Q$. The other component of the choice is discarded. 
%
The component $P/L$ behaves as $P$ except that any activity of type within the set $L$ are \emph{hidden}, i.e., they are relabeled with the unobservable type $\tau$. The meaning of a constant $A$ is given by a defining equation such as 
$A\rmdef P$ which gives the constant $A$ the behavior of the component~$P$. 
The cooperation combinator $\sync{L}$ is in fact an indexed family of combinators, one
for each possible set of action types, $L\subseteq \cA\setminus \{\tau\}$.
The \emph{cooperation set} $L$  defines the action types on which the components must synchronize or \emph{cooperate} (the unknown action type, $\tau$, may not appear in any cooperation set).
It is assumed that each component proceeds independently with any activities whose types do not occur
in the cooperation set $L$ (\emph{individual activities}). However, activities with action types in the set $L$ require
the simultaneous involvement of both components (\emph{shared activities}). These shared activities will only
be enabled in $P\sync{L}Q$ when they are enabled in both $P$ and $Q$. 
The shared activity will have the same action type as the two contributing activities and a rate
reflecting the rate of the slower participant \cite{hillston:book}.
If an activity has an unspecified rate in a component, the
component is passive~with respect to that action type. In this case 
the rate of 
the shared activity will be completely determined by the
other component.
For a given  $P$ and action  type $\alpha$, 
this is the \emph{apparent rate} \cite{marin:valuetools13} of 
$\alpha$ in $P$,
denoted $r_{\alpha}(P)$, that is the sum of the rates of the 
$\alpha$  activities enabled in~$P$. 

The semantics of each
term in PEPA is given via a labeled \emph{multi-transition system} where the 
multiplicities of arcs are significant. In the transition system, a
state or
\emph{derivative} corresponds to each syntactic term of the language 
and an arc represents the activity which causes
one derivative to evolve into another. The  set of
reachable states of a model $P$ is termed the \emph{derivative set} of $P$, denoted by $ds(P)$,
 and
constitutes the set of nodes of the \emph{derivation graph} of $P$ ($\cD(P)$) obtained
by applying the semantic rules exhaustively.
%
%
%
We denote by $\cA(P)$ the set of all the \emph{current action types} of $P$, i.e.,
 the set of action types which the component $P$ may next engage in.
We denote by $\Ac(P)$ the multiset of all the \emph{current activities} of $P$. 
%
%
%
%
Finally we denote by ${\vec{\cal A}(P)}$ the union of all ${\cA{(P')}}$ with $P'\in ds(P)$, i.e., the set of all action types syntactically occurring in $P$.
For any component $P$, the \emph{exit rate} from $P$ will be the sum of the 
activity rates of all the activities
enabled in $P$, i.e., $ q(P) = \sum_{a \in \Ac(P)}r_a$, with
$r_a$ being the rate of activity $a$. 
If $P$ enables more than one activity, $|\Ac(P)|>1$, then the
dynamic behavior of the model is determined by a race condition. 
This has the effect of replacing the
nondeterministic branching of the pure process algebra with probabilistic 
branching. The probability
that a particular activity completes is given by the ratio of the activity 
rate to the exit rate from $P$.

\subsection{Underlying Stochastic Process}
In \cite{hillston:book} it is proved that
for any finite PEPA model $P\rmdef P_0$ \com{with $ds(P)=\{P_0,\ldots,P_n\}$}, if we define the stochastic process 
$X(t)$, such that $X(t)=P_i$ indicates that the system behaves as component 
$P_i$ at time t, then $X(t)$ is a continuous time
Markov chain.

The \emph{transition rate} between two components $P_i$ and $P_j$, denoted 
 $q(P_i,P_j)$,~is the rate at
which the system changes from behaving as component $P_i$ to behaving as $P_j$. 
It is the sum of the
activity rates labeling arcs which connect the node cor\-responding to $P_i$ 
to the node corresponding to $P_j$ in $\cD(P)$,~i.e., 
\begin{center}
$ q(P_i,P_j)= \sum_{a\in \Ac(P_i| P_j)} r_a$
\end{center}
where $P_i\not = P_j$ and $\Ac(P_i| P_j) = \bms a\in \Ac(P_i) |\ P_i \transits{a} P_j\ems$. 
Clearly if $P_j$ is not a one-step derivative of $P_i$, $q(P_i,P_j)=0$.
The $q(P_i,P_j)$ (also denoted $q_{ij}$), are the off-diagonal elements of the 
infinitesimal generator matrix of the
Markov process, ${\bf Q}$. Diagonal elements are formed as the negative sum of 
the non-diagonal elements of
each row. We use the following notation: $q(P_i)=\sum_{j\neq i}q(P_i,P_j)$ and
 $q_{ii}=-q(P_i)$. 
For any finite and irreducible PEPA model $P$, the steady-state distribution 
$\mathrm{\Pi}(\cdot)$ exists and it may be found by solving the normalization equation and the global
balance equations: $\sum_{P_i\in ds(P)}\mathrm{\Pi}(P_i)=1$ and $\mathrm{\Pi} \mathbf{Q} =\mathbf{0}$.
%
%
%
The \emph{conditional transition rate} from $P_i$ to $P_j$ via an action type 
$\alpha$ is denoted $q(P_i,P_j,\alpha)$. This is
the sum of the activity rates labeling arcs connecting the corresponding 
nodes in the derivation graph
which are also labeled by the action type $\alpha$.
 It is the rate at which a system behaving as component $P_i$
evolves to behaving as component $P_j$ as the result of completing a 
type $\alpha$ activity.
%
The \emph{total conditional transition rate} from $P$ to $S\subseteq ds(P)$,
denoted $q[P,S,\alpha]$, is defined as
\[
q[P,S,\alpha]=\sum_{P'\in S} q(P,P',\alpha)
\]
where $q(P,P',\alpha)=\sum_{P \xrightarrow{(\alpha,r_{\alpha})} P'} r_{\alpha}$.

\subsection{Observation Equivalence}
In a process algebra, actions, rather than states, play the role of
capturing the observable behavior of a system  model. This leads to a formally defined notion 
of equivalence in which components are regarded
as equal if, under observation, they appear to perform exactly the same 
actions. In this section we recall a bisimulation-like relation,
named \emph{lumpable bisimilarity},
 for PEPA models \cite{marin:valuetools13}.



Two PEPA components are \emph{lumpably bisimilar}
 if there is an equivalence relation between them such that,
for any action type $\alpha$ different from $\tau$, the total conditional transition rates from those components to any equivalence
class, via activities of this type, are the same.

\begin{definition}\emph{(Lumpable bisimulation)}
\label{def:bisimulation}
An equivalence relation over PEPA components, $\cR\subseteq \cC\times \cC$, is a \emph{lumpable bisimulation} if whenever $(P,Q)\in \cR$ then for  all $\alpha\in \cA$ and for all  $S\in \cC/\cR$ such that
\begin{itemize}
\item either $\alpha\not = \tau$,
\item   or $\alpha=\tau$ and $P,Q\not \in S$,
\end{itemize}
it holds
$$q[P,S,\alpha]=q[Q,S,\alpha]\, .$$
\end{definition}


It is clear that the identity relation is a lumpable bisimulation.
We are interested in the relation which is the largest lumpable bisimulation, 
formed by the union of all lumpable bisimulations. 

%


\begin{definition}\emph{(Lumpable bisimilarity)}
\label{def:bisimilarity}
Two PEPA components $P$ and $Q$ are \emph{lumpably bisimilar}, 
written 
$P\approx_lQ$, if $(P,Q)\in\cR$ for some lumpable bisimulation $\cR$,~i.e.,
$$\approx_l \ =\bigcup \ \{\cR\ |\ \cR \mbox{ is a lumpable bisimulation}\}.$$
\noindent
$\approx_l$ is called \emph{lumpable bisimilarity} and it is the largest symmetric lumpable bisimulation over PEPA components.
\end{definition}

In \cite{marin:valuetools13} we proved that lumpable bisimilarity is a congruence for the so-called evaluation contexts, i.e.,
if $P_1 \approx_l P_2$ then 
\begin{itemize}
\item $a.P_1\approx_l a.P_2$;
\item 
 $P_1\sync{L}Q\approx_l P_2\sync{L}Q\ \ $ for all $L\subseteq \cA$.
\item  $P_1/L\approx_lP_2/L$. 
\end{itemize}

Notice that the notion of strong equivalence defined in  \cite{hillston:book} is stricter
  than that of lumpable bisimilarity because the latter allows arbitrary activities with
  type $\tau$ among components belonging to the same equivalence class. 

In \cite{hermanns:2005} a notion of weak bisimulation for CTMCs is introduced.
This  is based on the idea that the time-abstract behavior of equivalent states is weakly bisimilar and that the relative speed of these states to move to a different equivalence class is equal. To capture this intuition, the authors propose a definition of weak-bisimulation which resembles our notion  of lumpable bisimulation if we ignore action types and labels.
This bisimulation is defined in the context of both discrete and continuous time Markov chains 
without any notion of compositionality, and hence of contextuality. 
Compositionality is considered in \cite{AB09,boudali:2007,hennessy:2013}, where
definitions of  weak bisimilarities for stochastic process algebra based on the classical concept of weak action  are proposed.
Our approach shares with these bisimilarities the idea of ignoring the rates for non-synchronizing (labeled $\tau$) transitions between a state and the others belonging to the same equivalence class.
The main difference between our definition and those presented in \cite{AB09,boudali:2007,hennessy:2013} is that 
we explicitly studied  the relationships between 
our lumpable bisimilarity at the process algebra level and the induced lumping of the underlying Markov chains. This led to a coinductive characterization of a notion of contextual lumpability as described in \cite{marin:valuetools13}.

\section{Persistent Stochastic Non-Interference}\label{sec:psni}
The security property named \emph{Persistent Stochastic Non-Interference (PSNI)} tries to capture every possible information flow from a \emph{classified (high)} level of confidentiality to an \emph{untrusted (low)} one. A strong requirement of this definition is that no information flow should be possible even in the presence of malicious processes that run at the classified level. The main motivation is to protect a system also from internal attacks, which could be performed by the so-called \emph{Trojan Horse} programs, i.e., programs that appear honest but hide some malicious code inside them.


More precisely, the notion of \emph{PSNI} consists of checking   all the states reachable 
by the system against all high level potential interactions.

In order to formally define our security property, we partition the set $\cA\setminus \{\tau\}$  of visible action types, into two sets, ${\cal H}$ and ${\cal L}$ of high and low level action types. A high level PEPA component $H$ is a PEPA term such that for all $H'\in  ds(H)$,  $\cA(H')\subseteq {\cal H}$, i.e., every derivative of $H$ may next engage in only high level actions. We denote by ${\cal C}_H$  the set of all high level PEPA components.

A system $P$ satisfies \emph{PSNI} if for every state $P'$ reachable from $P$ and for every high level process $H$ a low level user cannot distinguish $P'$ from $P'\sync{{\cal H}}H$. In other words, a system $P$ satisfies \emph{PSNI} if what a low level user sees of the system is not modified when it cooperates with any high level process $H$.

In order to formally define the $\emph{PSNI}$ property,  we denote by  $P\setminus {\cal H}$ the PEPA component $(P\sync{{\cal H}}\bar{H})$ where $\bar{H}$ is any high level process that does not cooperate with $P$, i.e., 
for all $P'\in ds(P)$, ${\cal A}(P')\cap {\cal A}(\bar{H})=\emptyset$. Intuitively $P\setminus {\cal H}$ denotes the component $P$ prevented from performing high level actions.
Notice that the definition is well formed  in the sense that if $\bar{H_1}$ and $\bar{H_2}$ are two high level processes that do not cooperate with $P$, then the derivation graphs of $(P\sync{{\cal H}}\bar{H_1})$ and $(P\sync{{\cal H}}\bar{H_2})$ are isomorphic.
 
Properties  $\emph{SNI}$ and $\emph{PSNI}$ are formally defined as follows.

\begin{definition}\emph{(Stochastic Non-Interference)}
Let $P$ be a PEPA component. 
\[P\in SNI \mbox{ iff }
\forall H\in {\cal C}_H,\]
\[ P\setminus {\cal H}\approx_l (P\sync{{\cal H}}H)/{\cal H}\,.
\]
\end{definition}

\begin{definition}\emph{(Persistent Stochastic Non-Interference)}
\label{def:sni}
Let $P$ be a PEPA component. 
\[P\in PSNI \mbox{ iff }
\forall P'\in ds(P), \, \forall H\in {\cal C}_H,\]
\[ P'\in SNI, \mbox{ i.e., } P'\setminus {\cal H}\approx_l (P'\sync{{\cal H}}H)/{\cal H}\,.
\]
\end{definition}

We introduce a novel bisimulation-based equivalence relation over PEPA components, named $\approx_l^{hc}$, that allows us to give a first characterization of \emph{PSNI} with no quantification over all the high level components $H$. In particular, we show that $P\in \mathit{PSNI}$ if and only if $P\setminus {\cal H}$ and $P$ are not distinguishable with respect to $\approx_l^{hc}$. Intuitively, two processes are 
$\approx_l^{hc}$-equivalent if they can simulate each other in any possible high context, i.e., in every context $C[\_]$ of the form $(\_ \ \sync{{\cal H}}H)/{\cal H}$ where $H\in {\cal C}_H$.
Observe that for any high context $C[\_]$ and PEPA model $P$, all the states reachable from $C[P]$ have the form $C'[P']$ with $C'[\_]$ being a high context too and $P'\in ds(P)$.

We now introduce the concept of \emph{lumpable bisimulation on high contexts}: the idea is that, given two PEPA models $P$ and $Q$, when a high level context $C[\_]$ filled with $P$ executes a certain activity moving $P$ to $P'$ then the same context filled with $Q$ is able to simulate this step moving $Q$ to $Q'$ so that $P'$ and $Q'$ are again lumpable bisimilar on high contexts, and vice-versa. This must be true for every possible high context $C[\_]$. It is important to note that the quantification over all possible high contexts is re-iterated for $P'$ and $Q'$.
%
For a PEPA model $P$,  $\alpha\in \cA$,  $S\subseteq ds(P)$ and a high context $C[\_]$
we define:

\[
 q_C(P,P',\alpha)=\sum_{C[P] \xrightarrow{(\alpha,r_{\alpha})} C'[P']} r_{\alpha}
\]
and
\[
q_C[P,S,\alpha]=\sum_{P'\in S} q_C(P,P',\alpha)\,.
\]

The notion of \emph{lumpable bisimulation on high contexts} is defined as follows:

\begin{definition}\emph{(Lumpable bisimilarity on high contexts)}
\label{def:hcontexts}
An equivalence relation over PEPA components, $\cR\subseteq \cC\times \cC$, is a \emph{lumpable bisimulation on high contexts} if whenever $(P,Q)\in \cR$ then for all high context $C[\_]$,
for  all $\alpha\in \cA$ and for all  $S\in \cC/\cR$ such that
\begin{itemize}
\item either $\alpha\not = \tau$,
\item   or $\alpha=\tau$ and $P,Q\not \in S$,
\end{itemize}
it holds
$$q_C[P,S,\alpha]=q_C[Q,S,\alpha]\, .$$
Two PEPA components $P$ and $Q$ are \emph{lumpably bisimilar on high contexts}, 
written 
$P\approx^{hc}_lQ$, if $(P,Q)\in\cR$ for some lumpable bisimulation on high contexts $\cR$,~i.e.,
$$\approx^{hc}_l \ =\bigcup \ \{\cR\ |\ \cR \mbox{ is a lumpable bisimulation on high contexts}\}.$$
\noindent
$\approx^{hc}_l$ is called \emph{lumpable bisimilarity on high contexts} and it is the largest symmetric lumpable bisimulation on high contexts  over PEPA components. It is easy to prove that $\approx^{hc}_l$  is an equivalence relation.
\end{definition}

The next theorem gives a characterization of \emph{PSNI} in terms of  $\approx^{hc}_l$.

\begin{theorem}
Let $P$ be a PEPA component. Then \[P\in PSNI \mbox{ iff } P\setminus{\cal H}\approx^{hc}_l P\,.\]
\end{theorem}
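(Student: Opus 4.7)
The plan is to prove the two implications separately, in each case exhibiting an explicit bisimulation and verifying the coinductive conditions. In the forward direction I build a lumpable bisimulation on high contexts witnessing $P\setminus\mathcal{H}\approx^{hc}_l P$, while in the reverse direction I extract, from the hypothesis $P\setminus\mathcal{H}\approx^{hc}_l P$, an ordinary lumpable bisimulation between $P'\setminus\mathcal{H}$ and $(P'\sync{\mathcal{H}}H)/\mathcal{H}$ for every reachable $P'$ and every high process $H$.

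For the forward direction ($\Rightarrow$), I assume $P\in\mathit{PSNI}$ and take $\mathcal{R}$ to be the equivalence closure on $\mathcal{C}$ of $\approx_l$ together with the generator set $\{(P',P'\setminus\mathcal{H}):P'\in ds(P)\}$. To verify that $\mathcal{R}$ is a lumpable bisimulation on high contexts, pick one of these pairs and an arbitrary high context $C[\_]=(\_\sync{\mathcal{H}}H)/\mathcal{H}$. Because $P'\in ds(P)$ inherits \emph{SNI}, we have $C[P']=(P'\sync{\mathcal{H}}H)/\mathcal{H}\approx_l P'\setminus\mathcal{H}$. Separately, since no derivative of $P'\setminus\mathcal{H}$ enables a high action, the relation $\{((Q\sync{\mathcal{H}}H)/\mathcal{H},Q):Q\in ds(P'\setminus\mathcal{H})\}$ is immediately a lumpable bisimulation, giving $C[P'\setminus\mathcal{H}]\approx_l P'\setminus\mathcal{H}$. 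Transitivity yields $C[P']\approx_l C[P'\setminus\mathcal{H}]$. The rate equalities to $\mathcal{R}$-classes then follow because $\mathcal{R}$ refines $\approx_l$ (each $\mathcal{R}$-class is a disjoint union of $\approx_l$-classes) and the equality transfers by summation.

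For the reverse direction ($\Leftarrow$), I fix a witnessing relation $\mathcal{R}$ with $(P\setminus\mathcal{H},P)\in\mathcal{R}$. The key step is a propagation lemma: $(P'\setminus\mathcal{H},P')\in\mathcal{R}$ for every $P'\in ds(P)$, proved by induction on the length of the derivation $P\to^{\ast} P'$. For a low transition $Q\to Q'$ from a pair already in the relation, instantiating the bisimulation condition with the context $C_{\bar{H}}$ that blocks every high action isolates exactly the low behaviour of $Q$ and $Q\setminus\mathcal{H}$, and the rate-matching constraint forces $(Q'\setminus\mathcal{H},Q')\in\mathcal{R}$. For a high transition $Q\to Q'$, the $\tau$-absorption clause of Definition~\ref{def:hcontexts}, applied in a context that synchronises on the corresponding high action, forces $Q'$ into the same $\mathcal{R}$-class as $Q$ (hence as $Q\setminus\mathcal{H}$), and a short further argument places $Q'\setminus\mathcal{H}$ in that class as well. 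Once propagation is established, for any $P'\in ds(P)$ and any $H\in\mathcal{C}_H$ I instantiate $C[\_]=(\_\sync{\mathcal{H}}H)/\mathcal{H}$ and lift the resulting rate equalities through the same auxiliary bisimulation used in the forward direction to conclude $P'\setminus\mathcal{H}\approx_l (P'\sync{\mathcal{H}}H)/\mathcal{H}$.

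The main obstacle is the propagation argument in the reverse direction. Because $\approx^{hc}_l$ compares aggregate rates into $\mathcal{R}$-classes of \emph{inner} components rather than matching individual transitions of the context-applied states, recovering pointwise information about derivatives requires carefully choosing contexts that expose the transitions one wants to track. High-action moves of $P$ are the most delicate case: they appear only as $\tau$-moves in a cooperating high context and can be swallowed by the $\tau$-self-loop clause, so the argument must certify that such absorbed transitions land in the same $\mathcal{R}$-class as their source. A secondary subtlety present in both directions is the translation between rate equalities parameterised by $\mathcal{R}$-classes of inner components (as used by $\approx^{hc}_l$) and rate equalities between $\approx_l$-classes of fully evaluated states (as used by $\approx_l$), which is resolved by the auxiliary bisimulation that identifies the two partitions on the relevant fragment of the derivation graph.
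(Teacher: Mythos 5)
Your overall architecture matches the paper's: both directions proceed by exhibiting an explicit relation and checking the coinductive rate conditions, both exploit the degenerate context built from $\bar{H}$ together with contexts enabling a single high action type, and both reduce the $q_C[\cdot,S,\alpha]$ conditions to ordinary $\approx_l$ conditions via the observation that a high context leaves non-high rates untouched. Your forward direction is, up to the choice of relation, the paper's argument for that implication; it does carry one unargued obligation, namely that an arbitrary pair in $\approx_l$ (which your $\cal R$ contains in full) satisfies the $q_C$ conditions under every high context, which requires knowing that $\approx_l$-equivalent components have equal apparent rates for each high action type.

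The genuine gap is the propagation lemma in your reverse direction. From $(Q\setminus{\cal H},Q)\in{\cal R}$ and a low transition $Q\transits{(\alpha,r)}Q'$, the rate-matching constraint $q_{C}[Q\setminus{\cal H},[Q']_{\cal R},\alpha]=q_{C}[Q,[Q']_{\cal R},\alpha]>0$ does \emph{not} force $(Q'\setminus{\cal H},Q')\in{\cal R}$: it is an aggregate condition on classes, so it only guarantees that \emph{some} derivative of $Q\setminus{\cal H}$ --- necessarily of the form $Q''\setminus{\cal H}$ for some $Q''$ with $Q\transits{(\alpha,r')}Q''$ --- lies in $[Q']_{\cal R}$, and nothing identifies $Q''$ with $Q'$. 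The same defect recurs in your high-transition case, where after placing $Q'$ in $[Q]_{\cal R}$ the ``short further argument'' putting $Q'\setminus{\cal H}$ there too is not available. Indeed the invariant you are pushing through the induction is equivalent to the theorem itself applied to each $P'\in ds(P)$, so a direct induction cannot establish it without circularity. The repair is to weaken the invariant to the existential form actually delivered by your induction --- for every $P'\in ds(P)$ there exists $P''$ with $P''\setminus{\cal H}\in ds(P\setminus{\cal H})$ and $(P''\setminus{\cal H},P')\in{\cal R}$ --- and then close with one extra transitivity step: the degenerate context gives $P''\setminus{\cal H}\approx_l P'\setminus{\cal H}$, the context $(\_\sync{{\cal H}}H)/{\cal H}$ gives $P''\setminus{\cal H}\approx_l (P'\sync{{\cal H}}H)/{\cal H}$, and hence $P'\setminus{\cal H}\approx_l (P'\sync{{\cal H}}H)/{\cal H}$. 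This weaker invariant plus transitivity is exactly how the paper routes this implication.
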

\begin{proof}
We first show that $P\setminus{\cal H}\approx^{hc}_l P$ implies $P\in PSNI$. In order to do it we prove that
\[{\cal R}=\{(P_1\setminus{\cal H}, (P_2\sync{{\cal H}}H)/{\cal H})\,|\, H\in {\cal C}_H \mbox{ and } P_1\setminus{\cal H}\approx^{hc}_l P_2\}\]
is a lumpable bisimulation. This is sufficient to say that $P\in PSNI$.

First observe that, 
if $P\setminus{\cal H}\approx^{hc}_l P$ then for all 
 $P'\in ds(P)$ there exists
$P''\setminus {\cal H}\in ds(P\setminus {\cal H})$  such that
$P''\setminus {\cal H}\approx^{hc}_l P'$ and, 
by definition of ${\cal R}$,  for all 
$H\in {\cal C}_H$, $(P''\setminus {\cal H}, (P'\sync{{\cal H}}H)/{\cal H})\in {\cal R}$. Since ${\cal R}$ is a lumpable bisimulation, we have that for all 
$H\in {\cal C}_H$, $P''\setminus {\cal H}\approx_l (P'\sync{{\cal H}}H)/{\cal H}$. In particular, there exists $\bar{H}\in {\cal C}_H$ such that 
$(P'\sync{{\cal H}}\bar{H})/{\cal H}$ coincides with 
 $P'\setminus{\cal H}$.
Since $\approx_l $ is an equivalence relation, by symmetry and transitivity, we have that for every $P'\in ds(P)$ and for every 
$H\in {\cal C}_H$, $P''\setminus {\cal H}\approx_l P'\setminus {\cal H} \approx_l (P'\sync{{\cal H}}H)/{\cal H}$, i.e., $P\in PSNI$.
The fact that ${\cal R}$ is a lumpable bisimulation follows from: 
\begin{itemize}
\item  if $P_1\setminus {\cal H}\approx^{hc}_l P_2$ then for all $\alpha\in\cA$ with $\alpha \neq \tau$ and for all $S\in \cC/\!\approx^{hc}_l$ and for all high context $C[\_]$, we have $q_C[P_1\setminus{\cal H}, S,\alpha]=q_C[P_2, S,\alpha]$.   
Since a high context can only perform high level activities, we have that for all high level context $C[\_]$, it holds that $q[P_1\setminus{\cal H}, S,\alpha]=q_C[P_1\setminus{\cal H}, S,\alpha]$ and then $q[P_1\setminus {\cal H}, S,\alpha]=q_C[P_2, S,\alpha]$, i.e., we have  that for all $(P_1\setminus {\cal H}, (P_2\sync{{\cal H}}H)/{\cal H})\in {\cal R}$ and for all   $S'\in \cC/{\cal R}$ 
it holds
$q[P_1\setminus {\cal H}, S',\alpha]=q[(P_2\sync{{\cal H}}H)/{\cal H}, S',\alpha]$.
\item if $P_1\setminus {\cal H}\approx^{hc}_l P_2$ then for   $\alpha = \tau$ and for all $S\in \cC/\!\approx^{hc}_l$ with $P_1\setminus {\cal H}, P_2\not \in S$ and for all high context $C[\_]$, we have $q_C[P_1\setminus{\cal H}, S,\alpha]=q_C[P_2, S,\alpha]$.   Since a high context can only perform high level activities, we have that for all high level context $C[\_]$, it holds that $q[P_1\setminus{\cal H}, S,\alpha]=q_C[P_1\setminus{\cal H}, S,\alpha]$. Hence  for all $(P_1\setminus {\cal H}, (P_2\sync{{\cal H}}H)/{\cal H})\in {\cal R}$ and for all   $S'\in \cC/{\cal R}$ with $P_1\setminus {\cal H},(P_2\sync{{\cal H}}H)/{\cal H})\not \in S'$
it holds
$q[P_1\setminus {\cal H}, S',\alpha]=q[(P_2\sync{{\cal H}}H)/{\cal H}, S',\alpha]$.
\end{itemize}
We now show that if $P\in PSNI$ then $P\setminus {\cal H}\approx^{hc}_l P$. To this end it is sufficient to prove that
\[{\cal R}=\{(P_1\setminus {\cal H}, P_2)\,|\,  P_1\setminus{\cal H}\approx_l P_2\setminus{\cal H} \mbox{ and } P_2\in PSNI\}\]
is a lumpable bisimulation on high contexts. Indeed, let $C[\_]$ be a high context and $\alpha\in\cA$.
\begin{itemize}
\item Assume $\alpha \neq \tau$.  From  $P_1\setminus {\cal H}\approx_l P_2\setminus {\cal H}$, we have that for all $S\in \cC/\approx_l$, 
$q[P_1\setminus {\cal H}, S,\alpha]=q[P_2\setminus {\cal H}, S,\alpha]$.
Since a high context can only perform high level activities, we have that for all high context $C[\_]$ it holds  $q[P_1\setminus {\cal H}, S,\alpha]=q_C[P_1\setminus {\cal H}, S,\alpha]$. Moreover, since $\alpha\neq \tau$, $q[P_2\setminus {\cal H}, S,\alpha]=q_C[P_2, S',\alpha]$ where $S'=\{P\,|\, P\setminus {\cal H}\in S\}$,
i.e., for all high context $C[\_]$ and $S\in \cC/{\cal R}$ it holds
$q_C[P_1\setminus {\cal H}, S,\alpha]=q_C[P_2, S,\alpha] $.
\item
Consider now $\alpha=\tau$. From $P_1\setminus{\cal H}\approx_l P_2\setminus{\cal H}$, we have that for all $S\in \cC/\approx_l$ such that $P_1\setminus {\cal H}$, $P_2\setminus {\cal H}\not \in S$, 
$q[P_1\setminus {\cal H}, S,\alpha]=q[P_2\setminus {\cal H}, S,\alpha]$. Since a high context can only perform high level activities and both $P_1\setminus{\cal H}$ and $P_2\setminus {\cal H}$ do not perform high activities, we have that  $q[P_i\setminus {\cal H}, S,\alpha]=q_C[P_i\setminus{\cal H}, S,\alpha]$ for all high level context $C[\_]$ and for $i\in \{1,2\}$. From the fact that $P_2\in PSNI$, we have $P_2\setminus{\cal H}\approx_l (P_2\sync{{\cal H}}H)/{\cal H}$ for all
$H\in {\cal C}_H$, and then 
$q[P_2\setminus {\cal H}, S,\alpha]=q_C[P_2\setminus {\cal H}, S,\alpha]=q_C[P_2, S',\alpha]$ for all high context $C[\_]$, 
$S\in \cC/\approx^{hc}_l$ and $S'\in \cC/{\cal R}$ such that $P_2\setminus {\cal H}\not \in S$ and $P_2\not \in S'$, i.e.,
$q_C[P_1\setminus {\cal H}, S,\alpha]=q_C[P_2, S,\alpha]$ for all high context $C[\_]$ and 
$S\in \cC/{\cal R}$ such that $P_1\setminus{\cal H},P_2\not \in S$.
\end{itemize}
\end{proof}

Finally, we show how it is possible to give a characterization of \emph{PSNI} avoiding both the universal quantification over all the possible high level components and the universal quantification over all the possible reachable states.

Before we have shown how the idea of ``being secure in every state'' can be directly moved  inside the lumpable bisimulation on high contexts notion ($\approx^{hc}_l$). However this bisimulation notion implicitly contains a quantification over all possible high contexts. We now prove that $\approx^{hc}_l$ can be expressed in a rather simpler way by exploiting local information only. This can be done by defining a novel equivalence relation which focuses only on observable actions that do not belong to ${\cal H}$. More in detail, we define an observation equivalence where actions from ${\cal H}$ \emph{may} be ignored.

We first introduce the notion of \emph{lumpable bisimilarity up to ${\cal H}$}.

\begin{definition}\emph{(Lumpable bisimilarity up to ${\cal H}$)}
\label{def:upto}
An equivalence relation over PEPA components, $\cR\subseteq \cC\times \cC$, is a \emph{lumpable bisimulation up to ${\cal H}$} if whenever $(P,Q)\in \cR$ then 
for  all $\alpha\in \cA$ and for all  $S\in \cC/\cR$ 
\begin{itemize}
\item if $\alpha \not \in {\cal H}\cup\{\tau\}$ then $$q[P,S,\alpha]=q[Q,S,\alpha]\, ,$$
\item if $\alpha \in {\cal H}\cup\{\tau\}$ and $P,Q\not \in S$, then $$q[P,S,\alpha]=q[Q,S,\alpha]\,.$$
\end{itemize}
Two PEPA components $P$ and $Q$ are \emph{lumpably bisimilar up to ${\cal H}$}, 
written 
$P\approx^{\cal H}_lQ$, if $(P,Q)\in\cR$ for some lumpable bisimulation up to ${\cal H}$,~i.e.,
$$\approx^{\cal H}_l \ =\bigcup \ \{\cR\ |\ \cR \mbox{ is a lumpable bisimulation up to \mbox{${\cal H}$}}\}.$$
\noindent
$\approx^{\cal H}_l$ is called \emph{lumpable bisimilarity  up to ${\cal H}$} and it is the largest symmetric lumpable bisimulation  up to ${\cal H}$ over PEPA components.
\end{definition}

The next theorem shows that the binary relations $\approx^{hc}_l$ and 
$\approx^{{\cal H}}_l$ are equivalent.

\begin{theorem}\label{teo:H-hc}
Let $P$ and $Q$ be two PEPA components. Then \[P \approx^{hc}_l \, Q \mbox{ if and only if  }\,
P\approx^{\cal H}_l Q\,.\]
\end{theorem}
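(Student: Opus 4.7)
The plan is to prove each direction by showing that one of the two equivalences itself satisfies the defining clause of the other.

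For $\approx^{hc}_l \;\Rightarrow\; \approx^{\cal H}_l$, I would verify that $\approx^{hc}_l$ is a lumpable bisimulation up to ${\cal H}$ by instantiating the quantification over high contexts with carefully chosen witnesses. First, take the ``inert'' context $\bar{C}[\_]=(\_\sync{{\cal H}}\bar{H})/{\cal H}$, where $\bar{H}$ enables no high actions; then $q_{\bar{C}}[P,S,\alpha]=q[P,S,\alpha]$ for every $\alpha\notin{\cal H}$, because the cooperation neither blocks nor contributes to such activities. Consequently, the equality $q_{\bar{C}}[P,S,\alpha]=q_{\bar{C}}[Q,S,\alpha]$ furnished by $\approx^{hc}_l$ yields the required $\approx^{\cal H}_l$ condition for $\alpha\notin{\cal H}\cup\{\tau\}$ and, when $P,Q\notin S$, for $\alpha=\tau$. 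For the remaining case $\alpha=\gamma\in{\cal H}$ with $P,Q\notin S$, pick a second high context $C_\gamma$ whose $H$-component enables only $\gamma$, with a passive rate so that synchronization yields exactly the $P$-side rate. The $\tau$-transitions of $C_\gamma[P]$ that are not already $\tau$-transitions of $\bar{C}[P]$ come exclusively from the hidden $\gamma$-synchronization, so subtracting the two $q_{\cdot}[\cdot,S,\tau]$ quantities isolates $q[P,S,\gamma]$ and matches it to $q[Q,S,\gamma]$.

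For $\approx^{\cal H}_l \;\Rightarrow\; \approx^{hc}_l$, I would show that $\approx^{\cal H}_l$ satisfies the lumpable-bisimulation-on-high-contexts clause. Fix $(P,Q)\in\;\approx^{\cal H}_l$, a high context $C[\_]=(\_\sync{{\cal H}}H)/{\cal H}$, a class $S$, and an action type $\alpha$. For $\alpha\notin{\cal H}\cup\{\tau\}$, the only $\alpha$-transitions of $C[P]$ come from individual activities of $P$ (leaving $H$ and the hiding untouched), so $q_C[P,S,\alpha]=q[P,S,\alpha]=q[Q,S,\alpha]=q_C[Q,S,\alpha]$ by the low-action clause of $\approx^{\cal H}_l$. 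For $\alpha=\tau$ with $P,Q\notin S$, decompose the $\tau$-rate from $C[P]$ to states whose $P$-component lies in $S$ into (i) $P$'s own $\tau$-contribution, which equals $q[P,S,\tau]$, and (ii) the contribution of each hidden $\gamma\in{\cal H}$-synchronization, which by the PEPA rate formula simplifies to $\min(r_\gamma(P),r_\gamma(H))\,q[P,S,\gamma]/r_\gamma(P)$ after summing over the $H$-side derivatives. Each of the quantities $q[P,S,\tau]$ and $q[P,S,\gamma]$ is matched by its $Q$-analogue through the clauses of $\approx^{\cal H}_l$ governing $\tau$- and ${\cal H}$-actions leaving the class $[P]=[Q]$.

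The main obstacle I anticipate is the apparent-rate coefficient $\min(r_\gamma(P),r_\gamma(H))/r_\gamma(P)$ that appears in the hidden-synchronization contribution of the forward direction. Since $\approx^{\cal H}_l$ imposes no constraint on the $\gamma$-rates into the class $[P]=[Q]$, the apparent rates $r_\gamma(P)$ and $r_\gamma(Q)$ are not \emph{a priori} equal, so a direct term-by-term comparison of the two linear combinations is not automatic. Overcoming this will require exploiting the fact that $[P]$ is precisely the class excluded from the $\tau$-clause, so the only $\gamma$-rates entering the comparison are the ``outer'' ones $q[P,S,\gamma]$ with $S\neq[P]$; reconciling this with the apparent-rate factors is the most delicate bookkeeping step, and it is what ultimately forces equality of the overall $\tau$-rates through $C[\_]$.
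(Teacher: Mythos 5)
Your overall strategy coincides with the paper's: both directions are proved by exhibiting the given equivalence as a bisimulation of the other kind, and your choice of witness contexts for the direction $\approx^{hc}_l\Rightarrow\approx^{\cal H}_l$ (an inert context for the clauses with $\alpha\notin{\cal H}$, then a single-action context for each $\gamma\in{\cal H}$ followed by a subtraction) is exactly the paper's argument. Your use of a passive rate in the witness $C_\gamma$ even makes the arithmetic cleaner than the paper's, since with $r_\gamma(H)=\top$ the synchronized rate collapses to the $P$-side rate and $q_{C_\gamma}[P,S,\tau]=q[P,S,\tau]+q[P,S,\gamma]$ holds exactly. That direction is in order.

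The converse direction, however, is not finished, and the obstacle you flag is a genuine gap rather than delicate bookkeeping. The hidden $\gamma$-contribution to $q_C[P,S,\tau]$ is $\frac{q[P,S,\gamma]}{r_\gamma(P)}\min\left(r_\gamma(P),r_\gamma(H)\right)$, and the apparent rate $r_\gamma(P)=\sum_{S'}q[P,S',\gamma]$ ranges over \emph{all} classes, including $[P]=[Q]$ itself --- precisely the class on which the definition of lumpable bisimulation up to ${\cal H}$ imposes no constraint for $\gamma\in{\cal H}$. Concretely, let $P$ have a $\gamma$-transition back into its own class at rate $1$ plus a $\gamma$-transition into an external class $S$ at rate $1$, while $Q$ has only the $\gamma$-transition into $S$ at rate $1$; such $P$ and $Q$ can be lumpably bisimilar up to ${\cal H}$, yet against an active $H$ with $r_\gamma(H)=1$ the hidden contributions to $q_C[\cdot,S,\tau]$ are $\frac{1}{2}\cdot\min(2,1)=\frac{1}{2}$ versus $1\cdot\min(1,1)=1$. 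So the closing claim of your proposal --- that restricting attention to classes $S\neq[P]$ ``ultimately forces equality'' --- is exactly what fails unless the apparent rates themselves are shown equal, and nothing in $\approx^{\cal H}_l$ provides that. For what it is worth, the paper's own proof of this direction passes over the same point in silence: it asserts that equality of $q[\cdot,S,h]$ and $q[\cdot,S,\tau]$ on external classes yields equality of $q_C[\cdot,S,\tau]$ and then inducts on the number of current high action types, without addressing the $\min$-normalization. You have therefore isolated the one step that neither your argument nor the paper's actually discharges; completing it would require either an additional argument that $r_\gamma(P)=r_\gamma(Q)$ for related components, or a restriction on the class of high contexts (e.g.\ to passive ones), neither of which is currently in place.
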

\begin{proof}
We first show that $P\approx^{hc}_l Q$ implies $P\approx^{\cal H}_l Q$. In order to do it we prove that
\[{\cal R}=\{(P,Q)\,|\, P\approx^{hc}_l Q\}\]
is a lumpable bisimulation up to ${\cal H}$. This follows from the following cases. First observe that, by definition of ${\cal R}$, $S\in \cC/\!\approx^{hc}_l$ if and only if $S\in  \cC/\!{\cal R}$.
 
\begin{itemize}
\item  Let $\alpha \not \in {\cal H}\cup\{\tau\}$. From the fact that 
$P\approx^{hc}_l Q$ it holds that  for all $S\in \cC/\!\approx^{hc}_l$ and for all high context $C[\_]$,  $q_C[P, S,\alpha]=q_C[Q, S,\alpha]$. Since 
  $\alpha \not \in {\cal H}\cup\{\tau\}$, we have that $q[P, S,\alpha]=q[Q, S,\alpha]$.
  \item  Let $\alpha \in {\cal H}\cup\{\tau\}$. From the fact that 
    $P\approx^{hc}_l Q$ it holds that  for all $S\in \cC/\!\approx^{hc}_l$ such that $P,Q\not \in S$ and for all high context $C[\_]$,  $q_C[P, S,\tau]=q_C[Q, S,\tau]$. If $C[\_]$ does not synchronize with $P$, 
 we have that $q[P, S,\tau]=q[Q, S,\tau]$. On the other hand, consider a context $C[\_]$ with only one current action type $h\in {\cal H}$.
 Then, from $q_C[P, S,\tau]=q_C[Q, S,\tau]$ and  $q[P, S,\tau]=q[Q, S,\tau]$, it follows that if $P$ cooperates over $h$ then also $Q$ cooperates over $h$ and $q[P, S,h]=q[Q, S,h]$. 
\end{itemize}

We now show that if $P\approx^{\cal H}_l Q$ then  $P\approx^{hc}_l Q$. To this end it is sufficient to prove that
\[{\cal R}=\{(P,Q)\,|\, P\approx^{\cal H}_l Q\}\]
is a lumpable bisimulation on high contexts. This follows from the following cases. First observe that, by definition of ${\cal R}$, $S\in \cC/\!\approx^{hc}_l$ if and only if $S\in  \cC/\!{\cal R}$.

\begin{itemize}
\item  Let $\alpha \not \in {\cal H}\cup\{\tau\}$. From the fact that 
$P\approx^{\cal H}_l Q$ it holds that for all $S\in \cC/\!\approx^{\cal H}_l$,
$q[P, S,\alpha]=q[Q, S,\alpha]$.
Since a high context can only perform high level activities, we have that  $q[P, S,\alpha]=q_C[P, S,\alpha]$ and $q[Q, S,\alpha]=q_C[Q, S,\alpha]$ for all high context $C[\_]$. Hence, 
$q_C[P, S,\alpha]=q_C[Q, S,\alpha]$.
\item Let $\alpha =\tau$. From the fact that 
$P\approx^{\cal H}_l Q$ it holds that for all $S\in \cC/\!\approx^{\cal H}_l$ such that $P,Q\not \in S$, 
$q[P, S,\alpha]=q[Q, S,\alpha]$. Hence for all high level context that do not synchronize with $P$ and $Q$ we have that
 $q[P, S,\alpha]=q_C[P, S,\alpha]$ and $q[Q, S,\alpha]=q_C[Q, S,\alpha]$, i.e.,
$q_C[P, S,\alpha]=q_C[Q, S,\alpha]$.
\item Let $h\in{\cal H}$.  From the fact that 
$P\approx^{\cal H}_l Q$ it holds that for all $S\in \cC/\!\approx^{\cal H}_l$ such that $P,Q\not \in S$, 
  $q[P, S,h]=q[Q, S,h]$. From this and the fact that $q[P, S,\tau]=q[Q, S,\tau]$ it follows that
 for all high level context $C[\_]$ with only one current action type $h\in {\cal H}$, 
  $q_C[P, S,\tau]=q_C[Q, S,\tau]$.
By induction on the number of current action types of a high level context $C[\_]$, we obtain that for $\alpha=\tau$, for all $S\in \cC/{\cal R}$ 
with $P,Q\not \in S$ it holds $q_C[P, S,\alpha]=q_C[Q, S,\alpha]$.

\end{itemize}
\end{proof}

Theorem \ref{teo:H-hc} allows us to identify a local property of processes (with no quantification on the states and on the high contexts) which is a necessary and sufficient condition for  \emph{PSNI}. This is stated by the following corollary:

\begin{corollary}
  \label{cor:psni}
Let $P$ be a PEPA component. Then \[P\in PSNI \mbox{ iff } P\setminus {\cal H}\approx^{\cal H}_l P\,.\]
\end{corollary}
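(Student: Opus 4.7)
The plan is very short: this corollary is essentially a direct consequence of stitching together two already-established results, and I would present it as such rather than re-deriving anything from scratch.

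The first ingredient is the preceding theorem, which characterizes \emph{PSNI} by the condition $P\setminus {\cal H} \approx^{hc}_l P$. The second ingredient is Theorem \ref{teo:H-hc}, which asserts that for every pair of PEPA components $P_1, P_2$, the relations $\approx^{hc}_l$ and $\approx^{\cal H}_l$ coincide: $P_1 \approx^{hc}_l P_2$ if and only if $P_1 \approx^{\cal H}_l P_2$. Instantiating the latter with $P_1 := P\setminus {\cal H}$ and $P_2 := P$ immediately converts the characterization into $P\setminus {\cal H} \approx^{\cal H}_l P$, which is exactly the statement to be proved.

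Concretely, my write-up would take a single short chain: $P \in \mathit{PSNI}$ iff $P\setminus {\cal H} \approx^{hc}_l P$ (by the previous theorem) iff $P\setminus {\cal H} \approx^{\cal H}_l P$ (by Theorem \ref{teo:H-hc}). No induction, no new bisimulation candidate, and no case analysis on $\alpha \in {\cal H} \cup \{\tau\}$ versus $\alpha \notin {\cal H} \cup \{\tau\}$ is required here, because all of that work has already been absorbed into Theorem \ref{teo:H-hc}.

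Since the proof is a pure appeal to two lemmas, there is really no technical obstacle. The only thing to be careful about is to apply Theorem \ref{teo:H-hc} at the specific pair $(P\setminus {\cal H}, P)$ rather than at arbitrary components, and to note that both directions of that theorem are needed (so that the biconditional is preserved). This is precisely what makes the corollary valuable: it turns the global characterization via $\approx^{hc}_l$, whose definition still mentions all high contexts $C[\_]$, into a genuinely local condition phrased only in terms of the ordinary transition rates $q[\cdot, \cdot, \cdot]$ of $P$ and $P\setminus{\cal H}$.
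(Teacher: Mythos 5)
Your proposal is correct and matches the paper's own (implicit) argument exactly: the corollary is stated as an immediate consequence of the first characterization theorem ($P\in PSNI$ iff $P\setminus{\cal H}\approx^{hc}_l P$) combined with Theorem~\ref{teo:H-hc}, instantiated at the pair $(P\setminus{\cal H},P)$. Nothing further is needed.
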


Finally we provide a characterization of \emph{PSNI} in terms of \emph{unwinding conditions}. In practice, whenever a state $P'$ of a \emph{PSNI} PEPA model $P$ may execute a high level activity leading it to a state $P''$, then $P'$ and $P''$ are indistinguishable for a low level observer.

\begin{theorem}\label{theo:unwinding}
Let $P$ be a PEPA component.
\[P\in PSNI \mbox{ iff }
\forall P'\in ds(P), \, \] \[P' \transits{(h,r)} P'' \mbox{ implies }
P' \setminus {\cal H}\approx_l P''\setminus{\cal H}\]
\end{theorem}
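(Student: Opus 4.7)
The plan is to invoke Corollary~\ref{cor:psni}, which reformulates $P \in PSNI$ as $P \setminus {\cal H} \approx^{\cal H}_l P$, and then connect this single bisimilarity to the stated local unwinding condition by handling each direction separately.

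For the forward direction, I would start with the observation that $PSNI$ is closed under reachability, because $ds(P') \subseteq ds(P)$ for every $P' \in ds(P)$; thus by Corollary~\ref{cor:psni} both $P' \setminus {\cal H} \approx^{\cal H}_l P'$ and $P'' \setminus {\cal H} \approx^{\cal H}_l P''$. The central step is then to show that $P'$ and $P''$ belong to the same $\approx^{\cal H}_l$-class. Suppose not, and let $S$ be the class of $P''$; then $P'$ and $P' \setminus {\cal H}$ are in a common class different from $S$, so the definition of $\approx^{\cal H}_l$ applied to the high action $h$ forces $q[P', S, h] = q[P' \setminus {\cal H}, S, h] = 0$, contradicting $q(P', P'', h) \geq r > 0$. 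Chaining the three equivalences yields $P' \setminus {\cal H} \approx^{\cal H}_l P'' \setminus {\cal H}$, and since neither of these components (nor any derivative thereof) can perform a high action, $\approx^{\cal H}_l$ collapses to $\approx_l$ on this pair, giving $P' \setminus {\cal H} \approx_l P'' \setminus {\cal H}$.

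For the converse, I would construct an explicit equivalence ${\cal R}$ on $ds(P) \cup ds(P \setminus {\cal H})$ and verify it is a lumpable bisimulation up to~${\cal H}$ containing the pair $(P \setminus {\cal H}, P)$. Set $f(P') = P' \setminus {\cal H}$ for $P' \in ds(P)$ and $f(X) = X$ otherwise, and declare $(X_1, X_2) \in {\cal R}$ iff $f(X_1) \approx_l f(X_2)$; each ${\cal R}$-class $S$ then corresponds to a single $\approx_l$-class $T = \{f(Z) : Z \in S\}$. For $\alpha \not\in {\cal H}$, transitions of $X$ and $f(X)$ are in bijection with matching rates (since $\setminus {\cal H}$ blocks rather than hides, and $\bar{H}$ contributes no individual actions), so $q[X, S, \alpha] = q[f(X), T, \alpha]$ and the desired equality follows from $f(X) \approx_l f(Y)$, using the $X, Y \not\in S$ proviso for $\alpha = \tau$ to ensure $f(X), f(Y) \not\in T$. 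For $\alpha \in {\cal H}$ with $X, Y \not\in S$, the hidden forms contribute zero; while for $X = P' \in ds(P)$ any high successor $P''$ satisfies $P' \setminus {\cal H} \approx_l P'' \setminus {\cal H}$ by the unwinding hypothesis, placing $P''$ in the ${\cal R}$-class of $X$, which differs from $S$, so that $q[X, S, h] = 0$ on both sides.

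The main obstacle is the $\alpha \in {\cal H}$ case of the converse, since it is the only place where the unwinding hypothesis is actually used and where the asymmetry between $P'$ (which may perform high actions) and $P' \setminus {\cal H}$ (which cannot) must be reconciled. The delicate observation is that the unwinding forces high successors of $P'$ to remain in the same ${\cal R}$-class as $P'$, so they are precisely the successors that the ``$P, Q \not\in S$'' clause of $\approx^{\cal H}_l$ allows us to ignore; once this is in place the remaining verifications are essentially bookkeeping between the two syntactic forms.
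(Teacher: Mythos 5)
Your proposal is correct and follows essentially the same route as the paper: the forward direction uses Corollary~\ref{cor:psni} together with the observation that $P'\setminus{\cal H}$ has zero conditional transition rate on high types, forcing the high successor $P''$ into the $\approx^{\cal H}_l$-class of $P'$, and the converse establishes $P'\setminus{\cal H}\approx^{\cal H}_l P'$ and invokes Corollary~\ref{cor:psni} again. Your converse is in fact more carefully worked out than the paper's (which merely asserts the rate equalities are ``sufficient''), since you exhibit the witnessing relation $\cal R$ explicitly and isolate the high-action case where the unwinding hypothesis is actually used.
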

\begin{proof}
  We first prove that if $P\in PSNI $ then for all $P'\in ds(P)$, $P' \transits{(h,r)}~P''$ implies $P' \setminus {\cal H}\approx_l P''\setminus{\cal H}$. Indeed, by Definition \ref{def:sni}, $P'\in PSNI$ and therefore, by
  Corollary \ref{cor:psni}, $ P'\setminus {\cal H}\approx^{\cal H}_l P'$. By Definition \ref{def:upto} of $\approx^{\cal H}_l$, for all $S\in {\cal C}/\approx^{\cal H}_l$ such that $P'\setminus {\cal H}, P'\not \in S$, both
$q[P'\setminus {\cal H},S,\tau]=q[P',S,\tau]$ and
  $q[P'\setminus {\cal H},S,h]=q[P',S,h]$. Since $P'\setminus {\cal H}$ does not perform any high level action, $q[P'\setminus {\cal H},S,h]=0$ while, since $P' \transits{(h,r)} P''$,
  $q[P',S,\hat{h}]\neq 0$. Therefore, from $ P'\setminus {\cal H}\approx^{\cal H}_l P'$,
either $h$ is not a current action type of $P'$ or $P'\setminus {\cal H}, P' \in S$, i.e.,  $P'\setminus {\cal H}\approx^{\cal H}_l P''$. Since also $P''\in PSNI$, from $P''\setminus {\cal H}\approx^{\cal H}_l P''$ it follows that $P'\setminus {\cal H}\approx^{\cal H}_l P''\setminus {\cal H}$. Finally, since both $P'\setminus {\cal H}$ and $P''\setminus {\cal H}$ do not perform any high level activity, $P'\setminus {\cal H}\approx^{\cal H}_l P''\setminus {\cal H}$ is equivalent to $P'\setminus {\cal H}\approx_l P''\setminus {\cal H}$.

  We now prove that if  for all $P'\in ds(P)$, $P' \transits{(h,r)}~P''$ implies $P' \setminus {\cal H}\approx_l P''\setminus{\cal H}$ then $P\in PSNI$. Indeed observe that for all $\alpha\not \in {\cal H}\cup{\tau}$, and for all $S\in {\cal C}/\approx^{\cal H}_l$, $q[P'\setminus {\cal H},S,\alpha]=q[P',S,\alpha]$. Moreover, if $P'\setminus {\cal H}, P'\not \in S$ then $q[P'\setminus {\cal H},S,\tau]=q[P',S,\tau]$. This is sufficient to prove that $P'\setminus {\cal H}\approx^{\cal H}_l P'$, i.e., by Corollary \ref{cor:psni}, $P\in PSNI$.
\end{proof}

\section{Properties of Persistent Stochastic Non-Interference}\label{sec:compositionality}
In this section we prove some interesting propertis of $PSNI$. 
First we prove that $PSNI$ is compositional with respect to  low prefix, cooperation over low actions and hiding.

\begin{proposition} Let $P$ and $Q$ be two PEPA components. If $P,Q\in PSNI$, then
\begin{itemize}
\item $(\alpha,r).P\in PSNI\,$ for all $\alpha\in {\cal L}\cup\{\tau\}$
\item $P/L\in PSNI\,$ for all $L\subseteq {\cal A}$
\item $P\sync{L}Q\in PSNI\,$  for all $L\subseteq {\cal L}$
\end{itemize}
\end{proposition}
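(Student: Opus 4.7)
The plan is to invoke Theorem \ref{theo:unwinding} and verify, for each of the three operators, that every reachable state of the constructed process satisfies the unwinding condition: every outgoing high transition leads to a state which, after cutting high activities, is lumpably bisimilar to the source. The prefix case will be essentially trivial: the derivatives of $(\alpha,r).P$ are $\{(\alpha,r).P\} \cup ds(P)$, so the condition on states of $ds(P)$ is inherited from $P \in PSNI$, while the only transition out of $(\alpha,r).P$ itself is labelled by $\alpha \in \mathcal{L} \cup \{\tau\}$, leaving no high transition to check.

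For the cooperation case, since $L \subseteq \mathcal{L}$, any high transition from a reachable state $P' \sync{L} Q'$ must be an individual transition; by symmetry I would consider $(P' \sync{L} Q') \transits{(h,r)} (P'' \sync{L} Q')$ arising from $P' \transits{(h,r)} P''$, so that $P \in PSNI$ yields $P' \setminus \mathcal{H} \approx_l P'' \setminus \mathcal{H}$. The main step is then to establish the auxiliary identity
\[
(X \sync{L} Y) \setminus \mathcal{H} \approx_l (X \setminus \mathcal{H}) \sync{L} (Y \setminus \mathcal{H})
\]
for any $X, Y$ and $L \subseteq \mathcal{L}$. This should hold because the cooperation on $L$ and the blocking of $\mathcal{H}$ act on disjoint sets of action types: the two transition graphs are isomorphic, and the apparent rate of any $l \in L$ in $X$ coincides with that in $X \setminus \mathcal{H}$ since only $\mathcal{H}$-transitions are removed. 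Chaining this identity on both sides with the congruence of $\approx_l$ with respect to $\sync{L}$ would deliver $(P' \sync{L} Q') \setminus \mathcal{H} \approx_l (P'' \sync{L} Q') \setminus \mathcal{H}$.

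For the hiding case, the derivatives of $P/L$ are $\{P'/L : P' \in ds(P)\}$, and a high transition $(P'/L) \transits{(h,r)} (P''/L)$ forces $h \notin L$ and $P' \transits{(h,r)} P''$, so $P \in PSNI$ still gives $P' \setminus \mathcal{H} \approx_l P'' \setminus \mathcal{H}$; the goal is to lift this to $(P'/L) \setminus \mathcal{H} \approx_l (P''/L) \setminus \mathcal{H}$. The subtlety is that when $L \cap \mathcal{H} \neq \emptyset$ the process $(P'/L) \setminus \mathcal{H}$ carries extra $\tau$-transitions (high activities in $L$ hidden away) that are absent in $(P' \setminus \mathcal{H})/L$, so a bare appeal to the congruence of $\approx_l$ with respect to hiding does not work. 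My plan is to build the relation
\[
\mathcal{R}_0 \;=\; \{\,((P_1/L)\setminus\mathcal{H},\,(P_2/L)\setminus\mathcal{H}) : P_1, P_2 \in ds(P),\ P_1 \setminus \mathcal{H} \approx_l P_2 \setminus \mathcal{H}\,\}
\]
and show that the equivalence closure of $\mathcal{R}_0 \cup \approx_l$ is a lumpable bisimulation. For a visible $\alpha \in \mathcal{L} \setminus L$, the rates from $(P_i/L)\setminus\mathcal{H}$ into $\mathcal{R}$-classes reduce to the rates of $\alpha$-transitions from $P_i \setminus \mathcal{H}$ into $\approx_l$-classes, which match by hypothesis. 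For $\alpha = \tau$, the transitions split into three groups: original $\tau$-transitions of $P_i$, $L \setminus \mathcal{H}$-transitions of $P_i$ hidden to $\tau$, and $L \cap \mathcal{H}$-transitions hidden to $\tau$. The first two groups match between $P_1$ and $P_2$ via $\approx_l$. The third is exactly where PSNI enters: any $L \cap \mathcal{H}$-transition $P_i \transits{(h,r)} P_i'$ gives, via the unwinding condition for $P$, that $P_i \setminus \mathcal{H} \approx_l P_i' \setminus \mathcal{H}$, placing $((P_i/L)\setminus\mathcal{H}, (P_i'/L)\setminus\mathcal{H}) \in \mathcal{R}_0$ and hence the target of such a $\tau$-transition in the same $\mathcal{R}$-class as its source; such same-class $\tau$-transitions are precisely those ignored by the lumpable bisimulation condition.

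The hard part will be the hiding case: there the full strength of the PSNI hypothesis is essential, used inside the bisimulation construction in order to absorb the extra hidden-high $\tau$-transitions into self-loops of the $\mathcal{R}$-equivalence classes. Prefix is vacuous, cooperation is a congruence argument modulo the $\setminus\mathcal{H}/\sync{L}$ distributivity that holds when $L \subseteq \mathcal{L}$, and only hiding forces the unwinding condition to be woven directly into the choice of relation.
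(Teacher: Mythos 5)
Your proposal is correct and follows the paper's overall strategy: reduce the proposition to the unwinding characterization of Theorem \ref{theo:unwinding} and check, operator by operator, that every high transition of a reachable state leads to a $\setminus{\cal H}$-lumpably-bisimilar state. The prefix and cooperation cases match the paper's proof essentially step for step (the paper realizes your distributivity identity as the syntactic equality $(P'\sync{{\cal H}}\bar{H})\sync{L}(Q'\sync{{\cal H}}\bar{H}) = (P'\sync{L}Q')\sync{{\cal H}}\bar{H}$, valid because ${\cal H}\cap L=\emptyset$, and then applies congruence of $\approx_l$ with respect to $\sync{L}$). Where you genuinely diverge is the hiding case. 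The paper argues by congruence alone: from $(P'\sync{{\cal H}}\bar{H})\approx_l(P''\sync{{\cal H}}\bar{H})$ it derives $(P'\sync{{\cal H}}\bar{H})/L\approx_l(P''\sync{{\cal H}}\bar{H})/L$ and then rewrites this as $(P'/L)\setminus{\cal H}\approx_l(P''/L)\setminus{\cal H}$ by pushing $/L$ inside. This commutation is exactly the point you flag: when $L\cap{\cal H}\neq\emptyset$, an activity of type $h\in L\cap{\cal H}$ is blocked in $(P'\sync{{\cal H}}\bar{H})/L$ but survives as an individual $\tau$ in $(P'/L)\sync{{\cal H}}\bar{H}$, so the two derivation graphs are not isomorphic and the paper's rewriting step is not justified as stated. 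Your alternative — building the relation ${\cal R}_0$ explicitly and using the unwinding hypothesis to show that each extra hidden-high $\tau$-transition lands in the same ${\cal R}$-class as its source, hence is ignored by the lumpable bisimulation condition — is precisely the missing argument, and it correctly isolates where the full strength of $P\in PSNI$ (rather than mere congruence) is needed. In short: same skeleton, but your treatment of hiding is more careful than, and repairs a gap in, the paper's own proof.
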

\begin{proof}
Assume that $P,Q\in PSNI$.
\begin{itemize}
\item If $P\in PSNI$ then for all $P'\in ds(P)$, $P' \transits{(h,r)} P''$ implies   $P' \setminus {\cal H}\approx_l P''\setminus{\cal H}$. This property is clearly maintained for the PEPA component $(\alpha,r).P$ when $\alpha\in {\cal L}\cup\{\tau\}$.
\item If $P\in PSNI$ then for all $P'\in ds(P)$, $P' \transits{(h,r)} P''$ implies   $P' \setminus {\cal H}\approx_l P''\setminus{\cal H}$. Let  $L\subseteq {\cal A}$ and $P'/L\in ds(P)$. Assume that $P'/L \transits{(h,r)} P''/L$. From the fact that $P' \setminus {\cal H}\approx_l P''\setminus{\cal H}$ we have that
$(P'\sync{{\cal H}}\bar{H})\approx_l(P''\sync{{\cal H}}\bar{H})$  for any high level PEPA component $\bar{H}$ that does not cooperate with $P$. From the fact that lumpable bisimilarity is a congruence for the evaluation contexts, we have that for all $L\subseteq {\cal A}$, $(P'\sync{{\cal H}}\bar{H})/L\approx_l(P''\sync{{\cal H}}\bar{H})/L$. We can assume that $\vec{\cal A}(\bar{H})\cap L=\emptyset$ and hence, since also $\vec{\cal A}(\bar{H})\cap \vec{\cal A}(\bar{P})=\emptyset$,  $(P'/L\sync{{\cal H}}\bar{H})/L\approx_l(P''/L\sync{{\cal H}}\bar{H})/L$, i.e.,  $(P'/L) \setminus {\cal H}\approx_l (P''/L)\setminus{\cal H}$.
\item If $P,Q\in PSNI$ then for all $P'\in ds(P)$, $P' \transits{(h,r)} P''$ implies   $P' \setminus {\cal H}\approx_l P''\setminus{\cal H}$ and for all $Q'\in ds(Q)$, $Q' \transits{(h,r)} Q''$ implies   $Q' \setminus {\cal H}\approx_l Q''\setminus{\cal H}$. Let  $L\subseteq {\cal L}$ and $P'\sync{L}Q'\in ds(P\sync{L}Q)$. Assume that $P'\sync{L}Q'\transits{(h,r)} P''\sync{L}Q''$.  In this case, either $P'\transits{(h,r)} P''$ or 
$Q'\transits{(h,r)} Q''$. Assume  that $P'\transits{(h,r)} P''$ and then
$P'\sync{L}Q'\transits{(h,r)} P''\sync{L}Q'$.
From the hypothesis that $P\in PSNI$ we have that $P' \setminus {\cal H}\approx_l P''\setminus{\cal H}$, i.e., $(P'\sync{{\cal H}}\bar{H})\approx_l(P''\sync{{\cal H}}\bar{H})$  for any high level PEPA component $\bar{H}$ that does not cooperate with $P$ and $Q$.
From the fact that $\approx_l$ is a congruence with respect to the cooperation operator we have $(P'\sync{{\cal H}}\bar{H})\sync{L}(Q'\sync{{\cal H}}\bar{H})\approx_l(P''\sync{{\cal H}}\bar{H})\sync{L}(Q'\sync{{\cal H}}\bar{H})$, moreover sice ${\cal H}\cap L=\emptyset$ we obtain
$(P'\sync{L}Q')\sync{{\cal H}}\bar{H} \approx_l (P''\sync{L}Q')\sync{{\cal H}}\bar{H} $, i.e.,
$(P'\sync{L}Q')\setminus{\cal H} \approx_l (P''\sync{L}Q')\setminus{\cal H}$.
In the case that $Q'\transits{(h,r)} Q''$ the proof is analogous.
\end{itemize}

\end{proof}

Notice that the fact that $PSNI$ is not preserved by the choice operatior is a consequence of the fact that lumpable bisimilarity is not a congruence for this operator.

 We now prove that if $P\in PSNI$  then the equivalence class $[P]$ with respect to  lumpable bisimilarity $\approx_l$ is closed under \emph{PSNI}.

\begin{proposition}
Let $P$ and $Q$ be two PEPA components. If $P\in PSNI$ and $P\approx_l Q$ then also $Q\in PSNI$.
\end{proposition}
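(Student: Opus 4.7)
The plan is to exploit the local characterization of \emph{PSNI} given by Corollary \ref{cor:psni}, namely $P\in PSNI$ iff $P\setminus {\cal H}\approx^{\cal H}_l P$, together with the congruence of $\approx_l$ for cooperation (and hence for the derived operator $\setminus\,{\cal H}$).

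First I would observe that $\approx_l$ is strictly finer than $\approx^{\cal H}_l$: comparing Definitions \ref{def:bisimulation} and \ref{def:upto}, a lumpable bisimulation requires $q[P,S,\alpha]=q[Q,S,\alpha]$ for every visible $\alpha$ (including $\alpha\in{\cal H}$) and every equivalence class $S$, whereas a lumpable bisimulation up to ${\cal H}$ only requires it for $\alpha\in{\cal H}$ when $P,Q\notin S$. Hence any lumpable bisimulation is, in particular, a lumpable bisimulation up to ${\cal H}$, so $\approx_l\;\subseteq\;\approx^{\cal H}_l$. From $P\approx_l Q$ we therefore get $P\approx^{\cal H}_l Q$ for free.

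Next I would invoke the congruence of $\approx_l$ with respect to cooperation (recalled just after Definition \ref{def:bisimilarity}). Recall that $P\setminus{\cal H}=P\sync{{\cal H}}\bar H$ for some high-level $\bar H$ that does not cooperate with $P$ (and the same $\bar H$ works for $Q$, since the definition only depends on $\bar H$ being high and non-cooperating, and the derivation graph is isomorphic for any such choice). Hence $P\approx_l Q$ implies $P\setminus{\cal H}\approx_l Q\setminus{\cal H}$, and by the previous step also $P\setminus{\cal H}\approx^{\cal H}_l Q\setminus{\cal H}$.

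Finally, using the assumption $P\in PSNI$, Corollary \ref{cor:psni} yields $P\setminus{\cal H}\approx^{\cal H}_l P$. Chaining by transitivity of the equivalence $\approx^{\cal H}_l$,
\[
Q\setminus{\cal H}\;\approx^{\cal H}_l\; P\setminus{\cal H}\;\approx^{\cal H}_l\; P\;\approx^{\cal H}_l\; Q,
\]
so $Q\setminus{\cal H}\approx^{\cal H}_l Q$, and Corollary \ref{cor:psni} again gives $Q\in PSNI$. There is no real obstacle here; the only subtlety worth stating carefully is the inclusion $\approx_l\;\subseteq\;\approx^{\cal H}_l$, which is what allows us to bridge the two equivalences and propagate security along the $\approx_l$-class of $P$.
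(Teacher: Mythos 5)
Your proof is correct, but it follows a genuinely different route from the paper's. The paper proves the claim through the unwinding characterization (Theorem \ref{theo:unwinding}): given a high transition $Q'\transits{(h,r)}Q''$ from a derivative of $Q$, it uses $P\approx_l Q$ to find matching derivatives $P'\approx_l Q'$ and $P''\approx_l Q''$ of $P$ connected by a high transition, invokes the unwinding property of $P$ to get $P'\setminus{\cal H}\approx_l P''\setminus{\cal H}$, and transfers this to $Q'\setminus{\cal H}\approx_l Q''\setminus{\cal H}$ by congruence. You instead work globally with the characterization of Corollary \ref{cor:psni}, and your key extra ingredient --- which the paper never states explicitly but which is easily checked against Definitions \ref{def:bisimulation} and \ref{def:upto} --- is the inclusion $\approx_l\,\subseteq\,\approx^{\cal H}_l$: every lumpable bisimulation satisfies the (weaker) clauses of a lumpable bisimulation up to ${\cal H}$, since for $\alpha\in{\cal H}$ it guarantees equality of total conditional transition rates for \emph{all} classes $S$, not merely those excluding $P$ and $Q$. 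With that, the chain $Q\setminus{\cal H}\approx^{\cal H}_l P\setminus{\cal H}\approx^{\cal H}_l P\approx^{\cal H}_l Q$ closes the argument by transitivity. (Two small remarks: you claim the inclusion is \emph{strict}, which you neither need nor justify --- ``finer or equal'' suffices; and your argument, like the paper's, relies on $\approx^{\cal H}_l$ being transitive, which the paper asserts rather than proves, so you are on equal footing there.) Your version buys a shorter, derivative-free argument that avoids the somewhat delicate step in the paper's proof of extracting a matching high transition $P'\transits{(h,r')}P''$ from $P'\approx_l Q'$; the paper's version stays closer to the operational intuition of the unwinding condition and does not require relating the two bisimilarities.
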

\begin{proof}
  Let $P\in PSNI$ such that $P\approx_l Q$. Let
  $ Q'\in ds(Q)$ such that $Q' \transits{(h,r)} Q''$. From the hypothesis that
  $P\approx_l Q$, there exist $P',P''\in ds(P)$ such that $P'\approx_l Q'$ and $P''\approx_l Q''$. Hence there exists $r'$ such that
  $P' \transits{(h,r')} P''$ and $P' \setminus {\cal H}\approx_l P''\setminus{\cal H}$. From the fact that $\approx_l$ is a congruence with respect to the cooperation operator we have
$Q' \setminus {\cal H}\approx_l Q''\setminus{\cal H}$ and then also $Q\in PSNI$.
  \end{proof}

\section{Comparison with other SOS-based persistent security properties}\label{sec:comparison}
The security property  presented in this paper is \emph{persistent} in the sense that if a model $P$ is secure then
all the states reachable by $P$ during its execution are also secure.
Persistence is not a common feature of Non-Interference properties.
For example, many properties based on trace models, like 
\emph{generalized Non-Inference} and \emph{separability} \cite{McL94},
and the \emph{non local
bisimulation based noninterference} properties for the Markovian process calculus
defined in \cite{AB09}
are  not persistent.
Persistence is used in 
  program verification techniques based on type-systems to provide
 sufficient conditions to
Non-Interference properties,
like, e.g., in
\cite{ABF18,HR03,SM03,SV98}.
In this setting persistence  provides  sufficient static
conditions which are invariant with respect to execution and imply the
desired dynamic property.

In \cite{jcs06}, a persistent property named \emph{P\_BNDC} has been proposed for 
non-deterministic CCS processes. The aim of this definition is to capture a robust notion 
of security for processes which may
 move in the middle of a computation.
In this context persistence ensures that a secure process always migrates to a secure state. Notice that
if the system satisfies a non-persistent property then it might migrate when it is executing in an insecure state
and then, from the point of view of the new host, the incoming process is insecure and, consequently,
it should not be executed.
As  our Persistent Stochastic Non-Interference property \emph{PSNI}, property \emph{P\_BNDC} is provided with two
  sound and complete characterizations: one in terms of a behavioural equivalence between processes up to high level contexts and
another one in terms of unwinding conditions. Let us compare the expressivity of
\emph{P\_BNDC} and \emph{PSNI} by considering their SOS-based characterization in terms of unwinding conditions.
The formal unwinding characterization of \emph{P\_BNDC} for CCS processes is the following:
\begin{definition}\label{def:pbndc}
Let $P$ be a CCS process and $H$ denote the set of all high level actions.
\[P\in P\_BNDC \mbox{ iff }
\, \forall P'\mbox{ reachable from P }, \, \] \[P' \transits{h} P'' \mbox{ implies }
P' \Transits{\hat{\tau}} P''' \mbox{ and }
P'' \setminus {H}\approx P'''\setminus{H}\]
where $\Transits{\hat{\tau}}$ represents a possibly empty sequence of $\tau$ transitions and $\approx$ denotes  Milner's weak bisimulation relation \cite{Mil89}.
\end{definition}
Both PEPA and CCS are provided with a  structural operational semantics
that  allows us to compare the  definitions of \emph{PSNI} (for PEPA processes) and 
\emph{P\_BNDC} (for CCS processes)  just by considering the processes label transition systems eventually removing information concerning the activity rates.
Consider for instance the simple process depicted in Figure \ref{fig:non-PSNI}. 
If we discard the activity rates we can interpret the graph as the labeled transition system of a CCS process $P$. According to Definition 
\ref{def:pbndc} we have that $P$ satisfies  \emph{P\_BNDC}. On the contrary, when we consider the activity rates we have the model of a PEPA process $P$ which, according to Theorem \ref{theo:unwinding}, does not satisfies \emph{PSNI}. Indeed we cannot find a lumpable bisimulation such that $P \setminus {\cal H}\approx_l P'\setminus{\cal H}$.

The unwinding definition of \emph{PSNI} resembles the definition of \emph{Strong BNDC (SBNC)} which has been introduced in \cite{cl-vmcai03} as a sufficient condition for verifying \emph{P\_BNDC}.

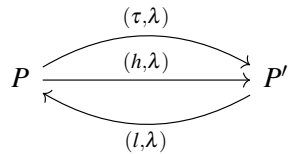
\begin{figure}[b]
\begin{center}
\begin{tikzcd}
   {} P\arrow[bend left]{rrr}{(\tau,\lambda)} \arrow{rrr}{(h,\lambda)} & & & P' \arrow[bend left]{lll}{(l,\lambda)} 
\end{tikzcd}
\end{center}
\caption{A simple two state model.}\label{fig:non-PSNI}
 \end{figure}

Recently, in \cite{AB09} Non-Interference properties for processes expressed as terms of a Markovian process calculus are introduced. The calculus presented in the paper allows the authors to model three kinds of actions: exponentially timed actions, immediate actions and passive actions. As a consequence, the proposed  process algebra encompasses nondeterminism, probability, priority and stochastic time. The behavioral observation defined by the authors extends the classical bisimulation relation of Milner \cite{Mil89}. The property named Bisimulation-based Strong Stochastic Local Non-Interference (\emph{BSSLNI}) is defined in the style of our unwinding conditions but it is based on an  observation equivalence named $\approx_{EMB}$ which abstracts from internal $\tau$ actions with zero duration. In particular, the relation $\approx_{EMB}$ is based on the idea that if a given class of processes is not reachable directly after executing a certain action, then one has to explore the possibility of reaching that class indirectly via a finite-length path $\pi$ of internal actions with zero duration but with a specific probability of execution $prob(\pi)$. As observed by the authors, in general the  performance indices of a system satisfying \emph{BSSLNI} are not independent from the presence or the absence of high level interactions.

\begin{figure}[tbp!]
\begin{center}
\begin{tikzcd}
{}& & &  P_1 \arrow[bend right]{llddd}[swap]{(h,\lambda)}  \arrow[bend left]{rrddd}{(l,\lambda)}  & & \\
\\
\\
   {}& P_2\arrow[bend right]{rrrr}[swap]{(l,\lambda)}&  & & & P_3 \arrow[bend left]{lluuu}{(l,\rho)}
\end{tikzcd}
\end{center}
\caption{A simple three state model.}\label{fig:PSNI}
\end{figure}
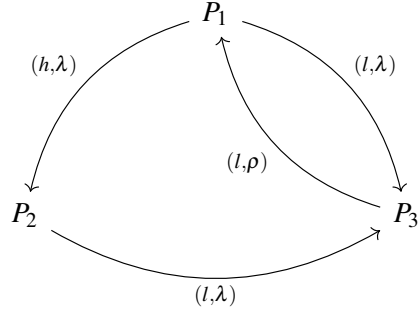

On the contrary, the observation equivalence at the base of our definition relies on the notion of lumpability and ensures that, for a secure process $P$,  the steady state probability of observing the system being in a specific state $P'$ is independent from its possible high level interactions. In order to show it consider  the simple three state system depicted in Figure \ref{fig:PSNI}.
In this case, following Theorem  \ref{theo:unwinding}, we can prove that $P_1\in PSNI$. Indeed, it is easy to prove that $P_1 \setminus {\cal H}\approx_l P_2\setminus{\cal H}$ when $\approx_l$ is the  lumpable bisimilarity.
In particular, the probability for a low level user to observe, in steady state, the system being in state
$P_3$ is independent from whether or not $P_1$ has performed the high level  activity $(h,\lambda)$.
To prove this, suppose that $P_1$  synchronizes over $h$. Then, for a low level observer, the system behaves as $P_1/{\cal H}$   depicted in Figure \ref{fig:hiding} $(a)$. We can compute the steady state distribution of $P_1/{\cal H}$ by solving the global balance equations together with the normalization condition, obtaining:

\[\begin{array}{rcl}
\pi_1*2\lambda& =& \pi_3*\rho\\
\pi_2*\lambda& =& \pi_1*\lambda\\
\pi_3*\rho& =& \pi_1*\lambda+\pi_2*\lambda\\
\pi_1+\pi_2+\pi_3& =& 1\\
\end{array}
\]
 whose solution is 
\[\begin{array}{cccccc}
\pi_1=\frac{\rho}{2(\lambda+\rho)}  & & \pi_2=\frac{\rho}{2(\lambda+\rho)} & &  \pi_3=\frac{\lambda}{\lambda+\rho} \\
\end{array}\]
where $\pi_1,\pi_2$ and $\pi_3$ denote the steady state probabilities of states $P_1/{\cal H}$, $P_2/{\cal H}$ and $P_3/{\cal H}$, respectively.

Consider now the case in which $P_1$ does not synchronize over $h$. Then the low level view of the system is represented by $P_1\setminus{\cal H}$ depicted in Figure \ref{fig:hiding} $(b)$. Again we can compute the steady state distribution of $P_1\setminus{\cal H}$ by solving the global balance equations together with the normalization condition, obtaining:

\[\begin{array}{rcl}
\pi_1*\lambda& =& \pi_3*\rho\\
\pi_3*\rho& =& \pi_1*\lambda\\
\pi_1+\pi_3& =& 1\\
\end{array}
\]
 whose solution is 
\[\begin{array}{cccccc}
\pi_1=\frac{\rho}{\lambda+\rho} & &  \pi_3=\frac{\lambda}{\lambda+\rho} \\
\end{array}\]
where $\pi_1$ and $\pi_3$ are the steady state probabilities of states $P_1\setminus{\cal H}$ and $P_3\setminus{\cal H}$ , respectively.
This proves that, from the low level point of view, the steady state probability of $P_3$ is independent from the fact that $P$ has cooperated with a high level context or not.

\begin{figure}[tbp!]
  \begin{center}
    \begin{small}
\begin{tikzcd}
{}& & &  P_1/{\cal H} \arrow[bend right]{llddd}[swap]{(\tau,\lambda)}  \arrow[bend left]{rrddd}{(l,\lambda)}    {}& & &   P_1\setminus{\cal H}  \arrow[bend left]{rrddd}{(l,\lambda)}  & & \\
\\
\\
  {}& P_2/{\cal H}\arrow[bend right]{rrrr}[swap]{(l,\lambda)}&  & & & P_3/{\cal H} \arrow[bend left]{lluuu}{(l,\rho)}  {}   & & & P_3\setminus{\cal H} \arrow[bend left]{lluuu}{(l,\rho)}\\
  \\
  \\
  {}&  & &  (a) \ \ P_1/{\cal H}    {} &  & & & (b) \ \ P_1\setminus{\cal H} & & \\
\end{tikzcd}
\caption{The models of $P_1/{\cal H}$ and  $P_1\setminus{\cal H}$.}\label{fig:hiding}
\end{small}
    \end{center}
\end{figure}
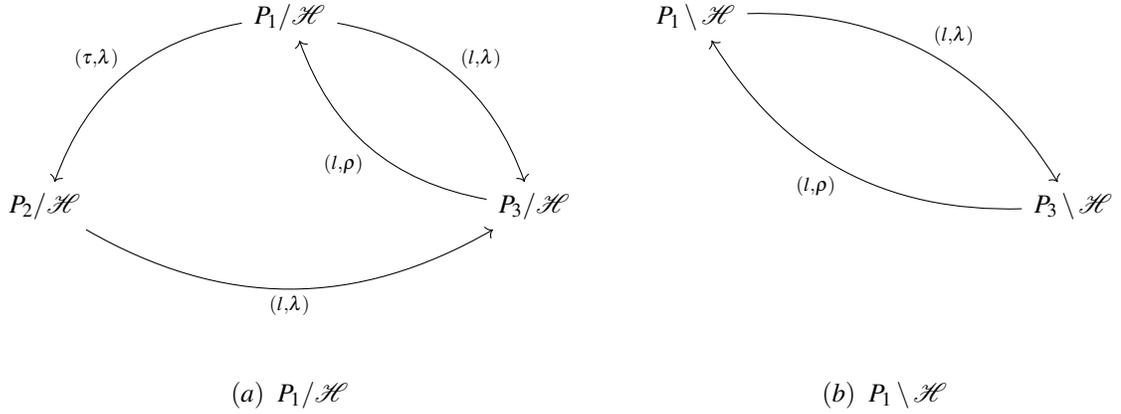

\section{Conclusion}\label{sec:conclusion}
In this paper we presented a \emph{persistent} information flow security property for stochastic processes expressed as terms of the PEPA process algebra.
Our property, named \emph{Persistent Stochastic Non-Interference} (\emph{PSNI}) is based on a structural operational semantics and a bisimulation based observation equivalence for the PEPA terms. We provide two characterizations for \emph{PSNI}: one in terms of a bisimulation-like equivalence relation and another one in terms of unwinding conditions.

The first characterization allows us to perform the verification of {\emph{PSNI} for finite state processes in polynomial time with respect to the number of states of the system \cite{cops}.

The second characterization is based on unwinding conditions. 
This kind of conditions for possibilistic security properties
have been already explored in the literature, like, e.g., in
\cite{RS01,Mil94,Man00b}.
Such unwinding conditions
have been  proposed for traces-based models
and represent only sufficient conditions
for their respective security properties. 
Differently, our unwinding conditions  provide both necessary and sufficient
 conditions for \emph{PSNI}.

Finally, in this paper we also deal with compositionality issues. Indeed, the
development of large and complex systems strongly depends on the ability
of dividing the task of the system into subtasks that are solved by system
subcomponents. Thus, it is useful to define properties which are compositional in the sense that
if the properties are satisfied by the system subcomponents then the system as a whole will satisfy the desired property
by construction.
We show that \emph{PSNI} is compositional with respect to low prefix, cooperation over low actions and hiding.

\vspace{1cm}
\textbf{Acknowledgments:} The work described in this paper has been partially supported by the Universit\`a Ca' Foscari Venezia - DAIS within the IRIDE program.
It has been also partially supported by the PRID project ENCASE financed by Universit\`a degli Studi di Udine and by GNCS-INdAM project Metodi Formali per la Verifica e la Sintesi di Sistemi Discreti e Ibridi.

\bibliographystyle{eptcs}
\bibliography{performance,security}

\end{document}